\newtheorem{thm}{Theorem}[section]
\newtheorem{lem}[thm]{Lemma}
\newtheorem{prop}[thm]{Proposition}
\theoremstyle{definition}
\newtheorem{defn}[thm]{Definition}
\theoremstyle{remark}
\newtheorem{rem}[thm]{Remark}
\numberwithin{equation}{section}
\newtheorem{example}{Example}
\begin{document}

\title{\bfseries\textrm{Memory-Driven Bounded Confidence Opinion Dynamics: A Hegselmann-Krause Model Based on Fractional-Order Methods}
\footnotetext{*
%This research was supported by the National Key Research and Development Program of Ministry of Science and Technology of China under the grant No. 2018AAA0101002, and the National Natural Science Foundation of China under the grant Nos. 61803024, and 11688101.\\
All the authors are with School of Mathematics and Statistics, Beijing Jiaotong University, Beijing 100044, China, {\tt 23121692@bjtu.edu.cn, su.wei@bjtu.edu.cn,
gjren@bjtu.edu.cn, ygyu@bjtu.edu.cn.} }             % e-mail addressYongguang Yu}\ead{}.
%, and Ge Chen is with National Center for Mathematics and Interdisciplinary Sciences \& Key Laboratory of Systems and
%Control, Academy of Mathematics and Systems Science, Chinese Academy of Sciences, Beijing 100190,
%China, {\tt chenge@amss.ac.cn}.
% Xueqiao Wang is with Beijing Key Laboratory of Information Service Engineering, Beijing Union University, Beijing 100101, China, {\tt ldxueqiao@buu.edu.cn}.
}

\author{Meiru Jiang, Wei Su, Guojian Ren, Yongguang Yu}

%\Author{Wei Su, Ge Chen And Yiguang Hong}%
%
%\address{}%
%\thanks{}%quasi-consensus
\date{}%
\maketitle
% ----------------------------------------------------------------
% ----------------------------------------------------------------
% Article Class (This is a LaTeX2e document)  ********************
% ----------------------------------------------------------------
% ----------------------------------------------------------------
\begin{abstract}
Memory effects play a crucial role in social interactions and decision-making processes. This paper proposes a novel fractional-order bounded confidence opinion dynamics model to characterize the memory effects in system states. Building upon the Hegselmann-Krause framework and fractional-order difference, a comprehensive model is established that captures the persistent influence of historical information. Through rigorous theoretical analysis, the fundamental properties including convergence and consensus is investigated. The results demonstrate that the proposed model not only maintains favorable convergence and consensus characteristics compared to classical opinion dynamics, but also addresses limitations such as the monotonicity of bounded opinions. This enables a more realistic representation of opinion evolution in real-world scenarios. The findings of this study provide new insights and methodological approaches for understanding opinion formation and evolution, offering both theoretical significance and practical applications.
\end{abstract}

\textbf{Keywords}: Memory, Fractional-Order, Hegselmann-Krause Model, Opinion Dynamics
% ----------------------------------------------------------------
\section{Introduction}

Opinion dynamics employs mathematical modeling, physical analogies, and computational methods to investigate the evolution of individual opinions during social interactions \cite{Castellano2009,Friedkin2015,Proskurnikov2017}. The field originates from the seminal work of French in 1956 \cite{french1956}, with subsequent developments bifurcating into two principal model classifications: those founded upon alternative interaction paradigms and those predicated on bounded confidence mechanisms.

Bounded confidence models, exemplified by the Deffuant-Weisbuch (DW) model \cite{deffuant2000mixing} and the Hegselmann-Krause (HK) model \cite{rainer2002opinion}, are grounded in the principle that agents selectively interact with neighbors whose opinions fall within a predefined similarity threshold \cite{Bernardo2024}. The HK model, in particular, captures polarization and consensus emergence through localized interactions and finite tolerance, making it well-suited for simulating opinion evolution in large-scale populations due to its self-organizing rules \cite{SU2017448}.
Other notable frameworks include the DeGroot model \cite{degroot1974reaching}, the Friedkin-Johnsen (FJ) model \cite{friedkin1990social} and the weighted-median model \cite{Mei2022}. These models excel in precisely describing pairwise influence mechanisms, rendering them ideal for analyzing opinion dynamics in specific groups with mathematical rigor.

Despite their theoretical sophistication, conventional opinion dynamics models face a fundamental limitation: they typically assume that opinion evolution depends solely on instantaneous interactions, neglecting the memory effects pervasive in real-world social behavior. In reality, individuals accumulate social experience through repeated interactions, and these historical impressions persistently shape future opinion formation. Empirical research robustly supports this view-psychological and sociological studies demonstrate that human decision-making is inherently history-dependent, with individuals often assessing current situations through the lens of past experiences. Such memory effects play a pivotal role in opinion dynamics, and interdisciplinary research further elucidates how memory shapes collective behavior, offering deeper insights into consensus formation, polarization and group decision-making \cite{gibson2005fine, mettke2003long, tsoar2011large}.

Recognizing this gap, recent studies have increasingly incorporated memory effects into opinion dynamics models. Some approaches, such as the FJ model, account for the persistence of initial opinions \cite{friedkin1990social}. Others introduce memory through mechanisms like time-delay effects \cite{Liu2023} or countdown-based interaction rules \cite{Jedrzejewski2018, Becchetti2023}.
However, existing models lack a unified mathematical framework for memory effects. Fractional-order calculus presents a promising solution, as it naturally captures non-local and history-dependent dynamics. Unlike integer-order models, fractional operators inherently weight past states, providing a principled way to integrate memory. Some studies have applied fractional calculus to opinion dynamics, such as \cite{girejko2014opinion, girejko2016, model2016hegselmann}, but these works adopt a direct replacement approach-converting standard models into fractional counterparts without addressing key limitations. For example, the cumulative weight of historical influences only asymptotically approaches unity over infinite time, violating realism in finite-step interactions (where total influence should sum to 1 at every step) and resulting in an insufficiency of theoretical analysis of dynamical properties of the fractional-order model.

This paper introduces a novel methodological framework that integrates fractional-order difference into the HK model to quantitatively capture memory effects in opinion formation. The HK model serves as a foundational paradigm in opinion dynamics, formalizing opinion exchange through its bounded confidence mechanism, where agents interact only with neighbors whose opinions fall within a predefined threshold. By leveraging local self-organizing rules, the HK model excels at simulating large-scale opinion evolution-a key advantage over global interaction models. Moreover, unlike prior studies that superficially impose fractional-order operators on classical models, our approach rigorously reformulates the HK framework using fractional calculus, ensuring that, the total historical influence sums to unity at each step (unlike ad hoc fractional extensions where weights only converge asymptotically).

The fractional-order bounded confidence model established in this paper exhibits strong theoretical properties. Through rigorous analysis, we prove some fundamental dynamical properties of the model, including
\begin{itemize}
        \item Maintains the ordinal consistency of opinions (order-preserving property);
        \item Guarantees convergence in isolated systems (asymptotic convergence);
        \item Achieves consensus when initial opinions are sufficiently concentrated, i.e., within mutual confidence bounds (consensus).
\end{itemize}

\noindent Crucially, our results reveal that memory effects fundamentally alter convergence properties:
\begin{itemize}
  \item Unlike classical HK models (finite-time convergence), the fractional variant only attains consensus asymptotically;
  \item The monotonicity of boundary opinions-a hallmark of the integer-order HK model-fails in the fractional case, reflecting memory's nonlinear smoothing effect.
\end{itemize}

\noindent These findings provide new theoretical insights into opinion dynamics, particularly how historical dependencies shape polarization, consensus, and collective decision-making in social networks.

The structure of this paper is as follows: Section 2 introduces fundamental concepts and discusses the main models examined in this work; Section 3 presents a detailed analysis of the proposed model's order-preserving property, convergence, and consensus conditions; Section 4 provides numerical simulation results; and Section 5 summarizes the findings and their implications.

%\section{Noise-driven synchronization of high-dimensional HK dynamics}
\section{Model}\label{Mod_sec}
\renewcommand{\thesection}{\arabic{section}}
%%%%%%%%%%%%%%%%%%%%%%%%%%%%%%%%%%%%%%%%%%%%%%%%%%%%%%%%%%%%%%%%%%%%%%%%%%%%%%%%%%%%%%%%%%%%%%%%%%%%
%In this section, we first provide preliminary knowledge about probability theory, and then propose our formulation.

\subsection{Hegselmann-Krause model}
The HK model is a foundational framework in opinion dynamics that examines how interpersonal influence drives the evolution of opinions in social networks. In this model, each agent holds a continuous opinion value and updates it iteratively through interactions with neighboring agents. These interactions are constrained by bounded confidence: an agent can only influence or be influenced by others whose opinions fall within its confidence range. This mechanism, grounded in local rules and self-organization, gives the HK model distinct advantages for modeling opinion dynamics in large-scale populations. Based on this framework, the HK model can be formally defined as follows:

Let $\mathcal{V}=\{1, 2, \cdots, n\}$ be the set of agents, then
%In this section, we introduce the classical Hegselmann-Krause (HK) model.
%The Hegselmann-Krause model has the following evolution dynamics:
\begin{equation}\label{model:HKorig}
	x_i(k+1)=\frac{\sum_{j \in I_i(k)}x_j(k)}{|I_i(k)|},
\end{equation}
where $i \in \mathcal{V}$, $x_i(k) \in [0,1]$ is the opinion value of agent $i$ at time $k$ and
\begin{equation}
	I_{i}(k)=\Big\{j\in\mathcal{V}\Big||x_i(k)-x_j(k)|\leq\epsilon\Big\}
\end{equation}
is the neighbor set of agent $i$ at $k$ with $ \epsilon \in (0,1]$ representing the confidence threshold and $|\cdot|$ denoting the absolute value of a real number or the cardinality of a set accordingly.

\subsection{Fractional-order difference operator}

%In this section, we provide a concise introduction to the fundamental concepts.
The integration of memory effects is crucial for modeling realistic agent behaviors in opinion dynamics. Building on dynamical systems theory, we utilize fractional-order derivatives to characterize agents' long-term memory properties. This framework captures temporal dependencies through which historical states continuously shape current opinion evolution.
Prior to model development, we introduce some preliminary materials.

%Let $c\in \mathbb{R}$ and denote $\mathbb{N}_c : =\left\{c, c+1, c+2,\cdots \right\}$, $(h\mathbb{N})_{c}:=\{ c, c+h, c+2h, \cdots \}$.
Given $\alpha\in (0,1)$, define the following sequence
\begin{equation}\label{equa:akalpha}
	a_{k}^{(\alpha)}:=
	\begin{cases}
		1 &\text{for} \quad k=0 \\
	(-1)^k \frac{\alpha(\alpha-1) \cdots (\alpha-k+1)}{k!} &\text{for}\quad k\geq 1.
	\end{cases}.
\end{equation}
Since $( -1 ) ^k\frac{\alpha ( \alpha -1 ) \cdots	( \alpha -k+1 )}{k!}=\Big( \begin{array}{c}\alpha\\k\\	
 \end{array}\Big)$, the sequence $ ( a_{k}^{( \alpha )} )_{k\geq 0}$ can be rewritten using the generalized binomial as follows $a_{k}^{( \alpha )}=\left( -1 \right) ^k\Big( \begin{array}{c}\alpha\\k\\
 \end{array} \Big)$.
The sequence $( a_{k}^{( \alpha )})_{k\geq 1}$ has the following properties:
\begin{prop}\cite{podlubny1998}\label{prop:akalphadecreas}
For $ \alpha \in (0,1)$, $ a_{k}^{( \alpha )} <0 $ for $ k \geq 1$ and the sequence $( |a_{k}^{( \alpha )}|)_{k\geq 1}$ is decreasing with $k$. Moreover, $ \underset{k\rightarrow \infty}{\lim}a_{k}^{\left( \alpha \right)}=0$.
\end{prop}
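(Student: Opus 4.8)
The plan is to argue directly from the closed form of $a_k^{(\alpha)}$ given in \eqref{equa:akalpha}, separating the sign from the magnitude. Write the numerator as $\alpha(\alpha-1)(\alpha-2)\cdots(\alpha-k+1)$. Since $\alpha\in(0,1)$, the factor $\alpha$ is positive while each of the $k-1$ factors $\alpha-j$ with $1\le j\le k-1$ is negative; hence the numerator has sign $(-1)^{k-1}$, and the prefactor $(-1)^k$ makes $a_k^{(\alpha)}<0$ for all $k\ge1$. The same computation yields the magnitude formula $|a_k^{(\alpha)}|=\dfrac{\alpha}{k!}\prod_{j=1}^{k-1}(j-\alpha)$, which is the identity I would use for the remaining two claims.

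For monotonicity I would simply compare consecutive terms via this product: $\dfrac{|a_{k+1}^{(\alpha)}|}{|a_k^{(\alpha)}|}=\dfrac{k-\alpha}{k+1}$. Since $\alpha>0$ we have $k-\alpha<k+1$, so the ratio is strictly less than $1$, and therefore $(|a_k^{(\alpha)}|)_{k\ge1}$ is strictly decreasing. In particular, being nonnegative and decreasing, it converges to some limit $L\ge 0$.

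For the limit statement the real task is to show $L=0$. I would rewrite $|a_k^{(\alpha)}|=\alpha\prod_{j=1}^{k-1}\dfrac{j-\alpha}{j+1}=\alpha\prod_{j=1}^{k-1}\Bigl(1-\dfrac{1+\alpha}{j+1}\Bigr)$, take logarithms, and apply the elementary inequality $\ln(1-t)\le -t$ valid for $t\in[0,1)$ (note that $\tfrac{1+\alpha}{j+1}\in(0,1)$ for every $j\ge1$ since $1+\alpha<2$). This gives $\ln|a_k^{(\alpha)}|\le \ln\alpha-(1+\alpha)\sum_{j=1}^{k-1}\dfrac{1}{j+1}$, and since the harmonic series diverges the right-hand side tends to $-\infty$; hence $|a_k^{(\alpha)}|\to0$, equivalently $a_k^{(\alpha)}\to0$.

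The only step requiring any care is this last one: strict monotonicity alone is not sufficient, because the ratios $\tfrac{k-\alpha}{k+1}$ tend to $1$, so one genuinely needs the divergence of $\sum 1/(j+1)$ --- equivalently, the classical fact that an infinite product $\prod(1-b_j)$ with $b_j\in[0,1)$ vanishes exactly when $\sum b_j=\infty$ --- to pin the decreasing limit at $0$ rather than at a positive value. (Alternatively one could invoke the Gamma-function asymptotics giving $|a_k^{(\alpha)}|\asymp k^{-1-\alpha}$, but the logarithmic estimate above is self-contained and avoids Stirling's formula.) Everything else is routine manipulation of the product representation.
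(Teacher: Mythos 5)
Your proposal is correct in all three parts: the sign count ($k-1$ negative factors $\alpha-j$ against the prefactor $(-1)^k$), the ratio $|a_{k+1}^{(\alpha)}|/|a_k^{(\alpha)}|=(k-\alpha)/(k+1)<1$, and the logarithmic estimate via $\ln(1-t)\le -t$ together with the divergence of $\sum 1/(j+1)$ to force the limit to be $0$ rather than a positive constant. There is nothing to compare against in the paper itself: Proposition \ref{prop:akalphadecreas} is stated without proof and attributed to Podlubny's monograph, so your argument serves as a self-contained elementary verification of a cited fact. You correctly identify the only nontrivial point, namely that monotone decrease alone does not yield a zero limit since the ratios tend to $1$. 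One remark: in the context of this paper the limit statement also follows immediately from the companion identity $\sum_{k=1}^{\infty}|a_k^{(\alpha)}|=1$ in (\ref{equ:sumalphaleq1}) (a convergent series has terms tending to zero), but that identity is likewise only cited, so your harmonic-series route has the advantage of not presupposing it.
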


\begin{prop} \cite{podlubny1998}
For $\alpha \in (0,1)$, the sequence $( |a_{k}^{( \alpha )}|)_{k\geq 1}$ satisfies:
\begin{equation}\label{equ:sumalphaleq1}%{equa:sumalpha}
\begin{split}
  &\sum_{k=1}^{\infty}|a_{k}^{( \alpha )}|=1,\\ %\quad\text{and}\quad
  &\sum_{k=1}^{K}|a_{k}^{( \alpha )}|<1, \,K\geq 0.
  \end{split}
\end{equation}
\end{prop}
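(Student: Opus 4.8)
The plan is to bypass Abel-summation subtleties at the boundary $x=1$ of the binomial series and instead work with an explicit closed form for the partial sums, from which both assertions fall out immediately together with the Proposition \ref{prop:akalphadecreas} already established (namely $a_k^{(\alpha)}<0$ for $k\ge 1$). First I would record the one-step recursion
\begin{equation*}
\frac{a_k^{(\alpha)}}{a_{k-1}^{(\alpha)}}=-\frac{\alpha-k+1}{k}=\frac{k-1-\alpha}{k},\qquad k\ge 1,
\end{equation*}
which follows directly from the definition \eqref{equa:akalpha} of $a_k^{(\alpha)}$ as $(-1)^k\binom{\alpha}{k}$. Using this I would prove by induction on $K$ the identity
\begin{equation*}
S_K:=\sum_{k=0}^{K}a_k^{(\alpha)}=\prod_{j=1}^{K}\frac{j-\alpha}{j},
\end{equation*}
the base case $S_0=1$ being the empty product, and the inductive step reducing to $S_{K+1}=S_K+a_{K+1}^{(\alpha)}=S_K\bigl(1-\tfrac{\alpha}{K+1}\bigr)$ after factoring, which is exactly $S_K\cdot\frac{K+1-\alpha}{K+1}$.

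Next, since $\alpha\in(0,1)$, every factor $\frac{j-\alpha}{j}=1-\frac{\alpha}{j}$ lies strictly in $(0,1)$, so $S_K\in(0,1)$ for all $K\ge 1$ (and $S_0=1$). Because $a_k^{(\alpha)}<0$ for $k\ge 1$ by Proposition \ref{prop:akalphadecreas}, we have $|a_k^{(\alpha)}|=-a_k^{(\alpha)}$, hence for $K\ge 1$
\begin{equation*}
\sum_{k=1}^{K}|a_k^{(\alpha)}|=-\sum_{k=1}^{K}a_k^{(\alpha)}=1-S_K\in(0,1),
\end{equation*}
and the case $K=0$ is the empty sum $0<1$; this gives the second line of \eqref{equ:sumalphaleq1}. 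The same computation shows the partial sums $\sum_{k=1}^{K}|a_k^{(\alpha)}|$ are increasing and bounded above by $1$, so the series converges and $\sum_{k=1}^{\infty}|a_k^{(\alpha)}|=1-\lim_{K\to\infty}S_K$.

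It therefore remains to show $S_K=\prod_{j=1}^{K}\bigl(1-\tfrac{\alpha}{j}\bigr)\to 0$. Taking logarithms and using $\ln(1-t)\le -t$ for $t\in(0,1)$ gives $\ln S_K\le -\alpha\sum_{j=1}^{K}\frac1j\to-\infty$ by divergence of the harmonic series, so $S_K\to 0$ and thus $\sum_{k=1}^{\infty}|a_k^{(\alpha)}|=1$, which is the first line of \eqref{equ:sumalphaleq1}. The only mildly delicate point is the induction bookkeeping in the closed form for $S_K$ and the sign tracking that turns $-\sum a_k^{(\alpha)}$ into $\sum|a_k^{(\alpha)}|$; the limit argument is routine once the harmonic-series divergence is invoked. (Alternatively one could quote $(1-x)^\alpha=\sum_{k\ge0}a_k^{(\alpha)}x^k$ and evaluate at $x=1$ via Abel's theorem, but verifying the hypotheses of Abel's theorem essentially requires the monotone-partial-sum estimate above anyway, so the direct route is cleaner.)
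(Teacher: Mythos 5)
Your argument is correct. Note, however, that the paper itself gives no proof of this proposition: it is quoted from Podlubny's monograph, so there is no in-paper argument to compare against. What you supply is a self-contained elementary derivation, and it checks out. The closed form $S_K=\sum_{k=0}^{K}a_k^{(\alpha)}=\prod_{j=1}^{K}\frac{j-\alpha}{j}$ is right: from the definition one gets $a_{K+1}^{(\alpha)}=-\frac{\alpha}{K+1}\prod_{j=1}^{K}\frac{j-\alpha}{j}$, which is exactly what the inductive step $S_{K+1}=S_K\cdot\frac{K+1-\alpha}{K+1}$ needs, and the sign conversion $\sum_{k=1}^{K}|a_k^{(\alpha)}|=1-S_K$ is licensed by Proposition \ref{prop:akalphadecreas}. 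Since each factor $1-\alpha/j$ lies in $(0,1)$, the strict inequality $\sum_{k=1}^{K}|a_k^{(\alpha)}|<1$ follows for all $K\ge 1$ (and trivially for $K=0$), and the logarithmic estimate $\ln S_K\le-\alpha\sum_{j=1}^{K}1/j\to-\infty$ correctly forces $S_K\to 0$, giving the sum $1$ in the limit. Compared with the usual textbook route (evaluating $(1-x)^{\alpha}=\sum_{k\ge 0}a_k^{(\alpha)}x^k$ at $x=1$ via Abel's theorem), your product formula has the advantage of delivering the finite-$K$ strict inequality and the infinite-sum identity in one stroke, with no appeal to boundary behavior of power series; it would be a reasonable self-contained substitute for the citation.
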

Then one can define
\begin{defn} \cite{podlubny1998}
Let $\alpha \in \mathbb{R}$, $h>0$. The Gr\"unwald-Letnikov-type fractional-order difference operator $\Delta_{h}^\alpha$ of order $ \alpha$ for a function $y:\mathbb{N} \rightarrow \mathbb{R}$ is defined by
\begin{equation}\label{equ:glfrac}
(\Delta_{h}^\alpha y)(kh):=h^{-\alpha}\sum_{s=0}^{k} a_{s}^{(\alpha)} y(kh-sh),
\end{equation}
where $k \geq 0$ and $a_k^{(\alpha)}$ is the sequence given by (\ref{equa:akalpha}). In this paper, we will always consider $h=1$.
\end{defn}

\subsection{Memory-based bounded confidence model}

The conventional update mechanism in the classical HK model operates under the restrictive assumption that opinion evolution depends exclusively on the instantaneous neighborhood mean, thereby disregarding the crucial temporal dependencies inherent in real-world opinion formation processes. This formulation fails to capture the fundamental mnemonic characteristics properties that govern human decision-making dynamics in social systems.

To address this critical limitation, we propose a fractional-order calculus-based reformulation of the system in (\ref{model:HKorig}). For each agent $i\in\mathcal{V}$, the updated opinion state $x_i(k+1)$ at time $k+1$ depends not only on its current state $x_i(k)$ but also on its historical opinion trajectory (i.e., $x_i(s), s<k$). This temporal dependence is explicitly modeled via a Gr\"unwald-Letnikov-type fractional-order difference term ($h=1$ in (\ref{equ:glfrac})):
\begin{equation*}
  \sum_{s=0}^{k-1}|a_{k+1-s}^{(\alpha)}|x_{i}(s),
\end{equation*}
where $\alpha\in [0,1)$ governs the memory effect's decay rate.

To align with the mean-based update mechanism of the original HK model (\ref{model:HKorig}), we enforce a normalization condition ensuring the weights sum to unity. Consequently, the weight coefficient for the current opinion $x_i(k)$ is derived as:
\begin{equation}\label{equa:weightxik}
  1-\sum_{s=0}^{k-1}|a_{k+1-s}^{(\alpha)}|.
\end{equation}
Based on the above mechanism, denoting
\begin{equation}\label{model:neighfrac}
  \bar{I}_{i}(k)=I_{i}(k)/ \{i\},
\end{equation}
we propose the following fractional-order HK model:
\begin{equation}\label{model:HKfrac}
\begin{split}
	   x_{i}(k+1)=&\frac{\sum\limits_{j \in\bar{I}_{i}(k)}x_{j}(k)}{|I_{i}(k)|}+\frac{1}{|I_{i}(k)|}\Big[\sum_{s=0}^{k-1}|a_{k+1-s}^{(\alpha)}|x_{i}(s)+\Big(1-\sum_{s=0}^{k-1}|a_{k+1-s}^{(\alpha)}|\Big)x_{i}(k)\Big],
\end{split}	
\end{equation}
where
\begin{itemize}
    \item The first term aggregates the influence of neighboring opinions at time $k$.
    \item The second term models the cumulative memory effect of the agent's historical opinions (times 0 to $k-1$).
    \item The third term captures the immediate self-influence of the agent's current opinion (time $k$), weighted to ensure normalization.
\end{itemize}

The proposed model authentically captures the following hallmark features of real-world opinion dynamics:
\begin{enumerate}
  \item Initial Behavior ($k = 0$):

  \noindent When $k = 0$, the system (\ref{model:HKfrac}) is reduced to the classical HK model without memory. Here, $x_i(1)$ depends solely on neighbors and $x_i(0)$, with equal weight $\frac{1}{|I_i(0)|}$.
  \item Temporal Evolution ($k \geq 1$):
\begin{itemize}\label{comment2}
  \item The weight coefficient of the current opinion, given by $1 - \sum_{s=0}^{k-1} |a_{k+1-s}^{(\alpha)}|$, decays to $|a_1^{(\alpha)}|=\alpha$ as $k\to\infty$.
 This demonstrates that the fractional order $\alpha$ represents the minimal persistent influence of an agent's current opinion on its subsequent state, characterizing a fundamental lower bound in the system's memory-dependent dynamics.
  \item Conversely, the total historical influence grows as $\sum_{s=0}^{k-1} |a_{k+1-s}^{(\alpha)}|$ accumulates, and finally tends to $1-|a_1^{(\alpha)}|=1-\alpha$ as $k\to\infty$ (see (\ref{equ:sumalphaleq1})). Hence, larger $\alpha$ implies weaker memory effect.
\end{itemize}
  \item Long-Term Memory Decay:

\noindent By Proposition \ref{prop:akalphadecreas}, $|a_{k+1-s}^{(\alpha)}| \to 0$ for fixed $s$ as $k \to \infty$. Thus, distant memories (e.g., $x_i(s)$ for small $s$) gradually lose influence, reflecting fading recall in social learning.

\end{enumerate}

\section{Dynamical properties of the model}

In this section, we will demonstrate that the memory-driven model (\ref{model:HKfrac}) also preserves fundamental properties of opinion dynamics, including order preservation, asymptotic convergence, and consensus. Figure \ref{fig:basicevolu} illustrates a typical evolution pattern of the memory-enhanced model. Notably, unlike memoryless models (\ref{model:HKorig}) that achieve convergence in finite time, the memory-equipped model can only attain convergence asymptotically.

\begin{figure}[ht]
  \centering
  % Requires \usepackage{graphicx}
  \includegraphics[width=3in]{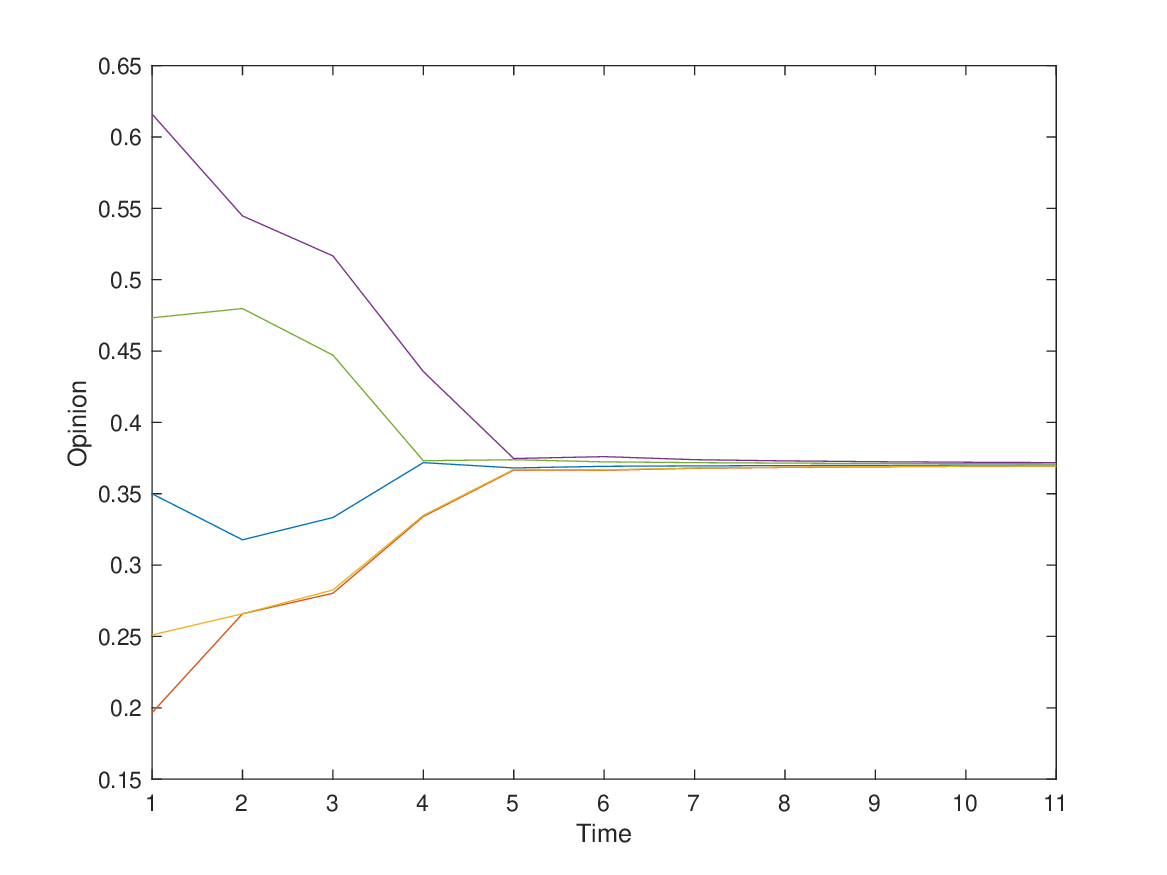}\\
  \caption{Five individuals with neighbor radius of $\epsilon=0.2$ and the  order is $\alpha=0.5$.}
  \label{fig:basicevolu}
\end{figure}

%\subsection{main properties and theoretical analysis}
\subsection{Order-Preserving Property}
Our first theoretical result establishes that the system preserves opinion ordering throughout its evolution:
\begin{thm}(Order-Preserving Property) \label{thm:Order-Preserving Property}
For the system described by (\ref{model:HKfrac}), the opinion dynamics preserve initial ordering. Specifically, for any two agents $i, j\in \mathcal{V}$ with initial opinions satisfying $x_i(0)\leq x_j(0)$, their opinions maintain this ordering for all subsequent times:
\begin{equation}
x_i(k) \leq x_j(k), \quad  k \geq 0.
\end{equation}
\end{thm}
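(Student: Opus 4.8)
The plan is to prove the statement by induction on $k$, exploiting the fact that the memory-enhanced update in (\ref{model:HKfrac}) is, for each agent, a \emph{convex combination} of its own history together with the current neighbourhood average, so the whole system is order-preserving provided the elementary HK-style comparison between two agents goes through at each step. The base case $k=0$ is the hypothesis $x_i(0)\le x_j(0)$. For the inductive step, I would assume $x_i(s)\le x_j(s)$ for all $0\le s\le k$ and all pairs, and show $x_i(k+1)\le x_j(k+1)$ for a fixed pair with $x_i(k)\le x_j(k)$.

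The key algebraic step is to rewrite the update (\ref{model:HKfrac}) in a uniform convex-combination form. Set $w_k := 1-\sum_{s=0}^{k-1}|a_{k+1-s}^{(\alpha)}|$ (the weight on $x_i(k)$), which by (\ref{equ:sumalphaleq1}) is positive, and note that the bracketed memory-plus-self term is $\sum_{s=0}^{k-1}|a_{k+1-s}^{(\alpha)}|x_i(s) + w_k x_i(k)$, a convex combination of $\{x_i(s)\}_{s=0}^{k}$ with coefficients summing to $1$. Hence
\begin{equation*}
x_i(k+1) = \frac{1}{|I_i(k)|}\Bigl[\sum_{j\in\bar I_i(k)} x_j(k) + \sum_{s=0}^{k-1}|a_{k+1-s}^{(\alpha)}|x_i(s) + w_k x_i(k)\Bigr],
\end{equation*}
so $x_i(k+1)$ is a weighted average of the current opinions of $i$'s neighbours and $i$'s own past opinions, with all weights nonnegative and summing to $1$. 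The same holds for $j$. I would then compare $x_i(k+1)$ and $x_j(k+1)$ by the standard HK trick: split each neighbour set as $I_i(k) = (I_i(k)\cap I_j(k)) \cup (I_i(k)\setminus I_j(k))$, observe that because $x_i(k)\le x_j(k)$ any agent in $I_i(k)\setminus I_j(k)$ has opinion $\le x_i(k)\le$ (everything relevant to $j$), while any agent in $I_j(k)\setminus I_i(k)$ has opinion $\ge x_j(k)$; combined with the inductive hypothesis $x_i(s)\le x_j(s)$ on the common history terms and $|I_i(k)|$ possibly differing from $|I_j(k)|$, this yields $x_i(k+1)\le x_j(k+1)$.

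The main obstacle is handling the \emph{asymmetry of the neighbour sets and their cardinalities}: $\bar I_i(k)$ and $\bar I_j(k)$ need not be equal, $|I_i(k)|\ne|I_j(k)|$ in general, and the memory terms of $i$ and $j$ carry different coefficient patterns only through their values $x_i(s)$ vs.\ $x_j(s)$, not through the coefficients (those $|a_{k+1-s}^{(\alpha)}|$ are the same for both agents, which helps). The careful bookkeeping is to show that every ``extra'' neighbour that $i$ has but $j$ does not can only pull $x_i(k+1)$ down relative to the comparison, and symmetrically for $j$; this is exactly the monotonicity argument used for the classical HK model, and the memory terms cause no trouble once they are recognized as order-preserving convex combinations of quantities already ordered by the inductive hypothesis. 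I would isolate this cardinality/neighbour-set manipulation as the technical heart of the proof and treat the convexity of the memory weights (guaranteed by (\ref{equ:sumalphaleq1}) and Proposition \ref{prop:akalphadecreas}) as the routine ingredient that makes the reduction to the classical case possible.
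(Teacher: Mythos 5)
Your outline follows the same route as the paper's proof: induction on $k$, the observation that the bracketed term $M_i(k):=\sum_{s=0}^{k-1}|a_{k+1-s}^{(\alpha)}|x_i(s)+\bigl(1-\sum_{s=0}^{k-1}|a_{k+1-s}^{(\alpha)}|\bigr)x_i(k)$ is a convex combination of agent $i$'s history, and a Lemma~\ref{lem:mean}-style comparison of the two neighbourhood averages. The genuine gap is exactly the part you set aside as routine, namely the claim that ``the memory terms cause no trouble once they are recognized as order-preserving convex combinations of quantities already ordered by the inductive hypothesis.'' The inductive hypothesis does give $M_i(k)\le M_j(k)$, but the update (\ref{model:HKfrac}) substitutes $M_i(k)$ for the self-term $x_i(k)$ inside agent $i$'s average and $M_j(k)$ for $x_j(k)$ inside agent $j$'s, so what must be controlled is
\begin{equation*}
x_j(k+1)-x_i(k+1)=\bigl(A_j(k)-A_i(k)\bigr)+\frac{M_j(k)-x_j(k)}{|I_j(k)|}-\frac{M_i(k)-x_i(k)}{|I_i(k)|},
\end{equation*}
where $A_i(k)$ is the classical HK average over $I_i(k)$. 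Your interval/cardinality bookkeeping handles only the first bracket. For the rest, already in the cleanest case $I_i(k)=I_j(k)=I$ one finds
\begin{equation*}
x_j(k+1)-x_i(k+1)=\frac{1}{|I|}\sum_{s=0}^{k-1}|a_{k+1-s}^{(\alpha)}|\Bigl[\bigl(x_j(s)-x_i(s)\bigr)-\bigl(x_j(k)-x_i(k)\bigr)\Bigr],
\end{equation*}
which is nonnegative only if the weighted average of the \emph{past} gaps dominates the \emph{current} gap. Nothing in the inductive hypothesis supplies this, and it fails whenever the gap between $i$ and $j$ has grown --- which does happen in bounded-confidence dynamics.

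Concretely, take $n=4$, $x(0)=(0,\,0.3,\,0.8,\,0.8)$, $\epsilon=0.5$, $\alpha=0.5$. The first step is memoryless and gives $x(1)=(0.15,\,0.475,\,19/30,\,19/30)$, so the gap between agents $1$ and $2$ grows from $0.3$ to $0.325$ while $I_1(1)=I_2(1)=\{1,2,3,4\}$. With $|a_2^{(1/2)}|=1/8$ the next update gives $x_1(2)=899/1920\approx 0.46823$ and $x_2(2)=359/768\approx 0.46745$, so the ordering reverses. Hence the inductive step cannot be closed by the argument you sketch; the ``careful bookkeeping'' you defer is not the neighbour-set asymmetry (that part is standard) but the memory correction, and no amount of bookkeeping will rescue it without an additional invariant controlling the evolution of pairwise gaps. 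For what it is worth, the paper's own proof makes the same leap --- its ``consequently'' step compares the two full averages term by term using only $M_i(k)\le M_j(k)$ --- so your proposal reproduces the paper's argument, gap included, rather than repairing it.
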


To establish the proof of Theorem \ref{thm:Order-Preserving Property}, the following basic lemma is needed.

\begin{lem} \cite{SU2017448}\label{lem:mean}
	Suppose $z_i, i=1,2,\ldots$ is a nonnegative nondecreasing (nonincreasing) sequence. Then for any $s \geq0$, the sequence $g_s(p)=\frac{1}{p}\sum_{i=s+1}^{s+p}z_i, p \geq 1$ is monotonically nondecreasing (nonincreasing) for $p$.
\end{lem}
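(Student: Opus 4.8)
The plan is to prove the nondecreasing case directly and then observe that the nonincreasing case follows by an identical argument with every inequality reversed. Since monotonicity of $g_s$ in $p$ is equivalent to the inequality $g_s(p+1) \geq g_s(p)$ holding for every $p \geq 1$, I would fix $s \geq 0$ and reduce the entire claim to comparing two consecutive block averages.

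Writing $S_p = \sum_{i=s+1}^{s+p} z_i$ so that $g_s(p) = S_p/p$ and $S_{p+1} = S_p + z_{s+p+1}$, I would compute the difference over a common denominator:
\begin{equation*}
g_s(p+1) - g_s(p) = \frac{S_{p+1}}{p+1} - \frac{S_p}{p} = \frac{p\,z_{s+p+1} - S_p}{p(p+1)}.
\end{equation*}
Because $p(p+1) > 0$, the sign of this difference is governed entirely by the numerator $p\,z_{s+p+1} - S_p$, so the whole lemma collapses to determining the sign of that single expression.

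The only structural input needed is the monotonicity of $(z_i)$: since the sequence is nondecreasing, every index in the range $s+1 \leq i \leq s+p$ satisfies $z_i \leq z_{s+p+1}$, and summing these $p$ inequalities yields $S_p \leq p\,z_{s+p+1}$. Hence the numerator is nonnegative and $g_s(p+1) \geq g_s(p)$, which establishes that $g_s$ is nondecreasing. Intuitively, the newly appended term $z_{s+p+1}$ is at least as large as each preceding term, hence at least as large as their average $g_s(p)$, so incorporating it cannot pull the running average down.

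For the nonincreasing case the same computation applies verbatim; monotonicity now gives $z_i \geq z_{s+p+1}$ for $s+1 \leq i \leq s+p$, whence $S_p \geq p\,z_{s+p+1}$, the numerator is nonpositive, and $g_s(p+1) \leq g_s(p)$. I expect no genuine obstacle here: the argument is a one-step telescoping comparison, and in fact the nonnegativity hypothesis is never used—only the ordering of the $z_i$ matters—so the sole care required is correct bookkeeping of the index range and the sign of the numerator.
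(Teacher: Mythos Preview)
Your argument is correct: the consecutive-difference computation $g_s(p+1)-g_s(p)=\frac{p\,z_{s+p+1}-S_p}{p(p+1)}$ together with the termwise bound $z_i\le z_{s+p+1}$ (or the reverse) is exactly what is needed, and your observation that nonnegativity is never actually used is also right. The paper itself does not supply a proof of this lemma---it is quoted from \cite{SU2017448} as a known auxiliary fact---so there is no in-paper argument to compare against; your proof is the standard elementary one and would serve perfectly well as a self-contained justification.
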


\noindent\textbf{Proof of Theorem \ref{thm:Order-Preserving Property}:}
We will prove the conclusion using method of mathematical induction.

If $x_i(0)\leq x_j(0)$, by Lemma \ref{lem:mean} we have
\begin{equation}
	\begin{split}
x_{i}(1)=& \frac{\sum_{l\in \bar{I}_{i}(0)}x_{l}(0)}{|I_{i}(0)|}+\frac{1}{|I_{i}(0)|}x_{i}(0)\\
=& \frac{\sum_{l\in I_{i}(0)}x_{l}(0)}{|I_{i}(0)|}\\
\leq&  \frac{\sum_{l\in I_{j}(0)}x_{l}(0)}{|I_{j}(0)|}\\
=& \frac{\sum_{l\in \bar{I}_{j}(0)}x_{l}(0)}{|I_{j}(0)|}+\frac{1}{|I_{j}(0)|}x_{j}(0)\\
=&x_{j}(1).
\end{split}	
\end{equation}
Now suppose for $k \geq 1$, $x_i(s)\leq x_j(s)$ for all $s\leq k$. Then
\begin{equation}
	\begin{split}
&	\sum_{s=0}^{k-1}|a_{k+1-s}^{(\alpha)}|x_{i}(s)+\Big(1-\sum_{s=0}^{k-1}|a_{k+1-s}^{(\alpha)}|\Big)x_{i}(k)\leq \sum_{s=0}^{k-1}|a_{k+1-s}^{(\alpha)}|x_{j}(s)+\Big(1-\sum_{s=0}^{k-1}|a_{k+1-s}^{(\alpha)}|\Big)x_{j}(k).\\
	\end{split}
\end{equation}
Consequently, we obtain
\begin{equation}
	\begin{split}
		x_{i}(k+1)=& \frac{\sum_{l \in \bar{I}_{i}(k)}x_{l}(k)}{|I_{i}(k)|}+\frac{1}{|I_{i}(k)|}\Big[\sum_{s=0}^{k-1}|a_{k+1-s}^{(\alpha)}|x_{i}(s)+\Big(1-\sum_{s=0}^{k-1}|a_{k+1-s}^{(\alpha)}|\Big)x_{i}(k)\Big]\\
	\leq &  \frac{\sum_{l \in \bar{I}_{j}(k)}x_{l}(k)}{|I_{j}(k)|}+\frac{1}{|I_{j}(k)|}\Big[\sum_{s=0}^{k-1}|a_{k+1-s}^{(\alpha)}|x_{j}(s)+\Big(1-\sum_{s=0}^{k-1}|a_{k+1-s}^{(\alpha)}|\Big)x_{j}(k)\Big]\\
= &x_{j}(k+1).
	\end{split}
\end{equation}
Thus the method of mathematical induction yields the conclusion.\hfill $\Box$
%By the principle of, we conclude that for all $k\in N$, the inequality $x_i(k)\leq x_j(k)$ holds.

%\end{proof}

\subsection{Asymptotic Convergence}
The order-preserving property established in Theorem \ref{thm:Order-Preserving Property} raises fundamental questions about the long-term evolution of the memory-enhanced system (\ref{model:HKfrac}). We now present the key convergence result for these dynamics:

\begin{thm}{(Asymptotic Convergence)} \label{thm:Convergence}
For the fractional opinion dynamics governed by (\ref{model:HKfrac}), the following holds for any initial configuration $x(0) \in [0,1]^n$:
\begin{itemize}
\item Each opinion trajectory admits a well-defined limit: For every agent $i \in \mathcal{V}$, there exists $x_i^* \in [0,1]$ such that $\lim_{k \to \infty} x_i(k) = x_i^*$;
\item The limiting configuration $x^* = (x_1^*,...,x_n^*)$ constitutes an equilibrium state of the system.
\end{itemize}
\end{thm}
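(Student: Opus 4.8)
The plan is to exploit the order-preserving property from Theorem \ref{thm:Order-Preserving Property} to reduce the problem to monotone sequences. Without loss of generality relabel the agents so that $x_1(0) \le x_2(0) \le \cdots \le x_n(0)$; by Theorem \ref{thm:Order-Preserving Property} this ordering persists, i.e. $x_1(k) \le x_2(k) \le \cdots \le x_n(k)$ for all $k \ge 0$. The first step is to show the \emph{extreme} trajectories $x_1(k)$ and $x_n(k)$ converge. I would argue that $x_1(k)$ is nondecreasing: since agent $1$ holds the smallest opinion, every neighbor $j \in \bar I_1(k)$ satisfies $x_j(k) \ge x_1(k)$, and each historical value $x_1(s)$ with $s \le k$ satisfies $x_1(s) \le x_1(k)$ only if $x_1$ were nondecreasing — so this needs care. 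A cleaner route: show directly that $m(k) := \min_i x_i(k) = x_1(k)$ is nondecreasing and $M(k) := \max_i x_i(k) = x_n(k)$ is nonincreasing. For $m(k)$, note
\begin{equation*}
x_1(k+1) = \frac{\sum_{l \in \bar I_1(k)} x_l(k)}{|I_1(k)|} + \frac{1}{|I_1(k)|}\Big[\sum_{s=0}^{k-1}|a_{k+1-s}^{(\alpha)}| x_1(s) + \big(1-\textstyle\sum_{s=0}^{k-1}|a_{k+1-s}^{(\alpha)}|\big)x_1(k)\Big],
\end{equation*}
and the right side is a convex combination of values $x_l(k) \ge m(k)$ (for $l \in I_1(k)$) and historical values $x_1(s)$; if all historical values are $\ge m(k)$ as well, then $x_1(k+1) \ge m(k)$. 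Since $m(\cdot)$ being nondecreasing is exactly what makes $x_1(s) \ge m(0)$ hold, I would set this up as a joint induction: assume $m(0) \le m(1) \le \cdots \le m(k)$; then every term entering $x_1(k+1)$ is $\ge m(k)$, giving $m(k+1) = x_1(k+1) \ge m(k)$. Symmetrically $M(k)$ is nonincreasing. Being monotone and bounded in $[0,1]$, both $m(k) \to m^*$ and $M(k) \to M^*$.

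The second step is to promote convergence of the extremes to convergence of every trajectory. This is the main obstacle, because the bounded-confidence neighbor sets $I_i(k)$ can change with $k$, so the system is not a fixed linear map. The standard HK argument uses that once two agents are no longer neighbors they stay separated, reducing to finitely many "clusters," but here the memory term couples each agent to its \emph{own} past, so I must verify that the clustering argument still goes through. Concretely, I would show: (i) the ordered gaps $x_{i+1}(k) - x_i(k)$ either stay $\le \epsilon$ forever or eventually exceed $\epsilon$ permanently (using order preservation plus a monotonicity/non-expansion estimate on consecutive gaps), so after finite time the agent set partitions into blocks that never again interact across block boundaries; and (ii) within a block the dynamics become, asymptotically, a closed system whose extreme trajectories (block-min and block-max) are again monotone by the same argument, and whose width shrinks. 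For the width to shrink to $0$ inside a block, I would estimate $x_{i+1}(k+1) - x_i(k+1)$: the neighbor-averaging part contracts as in the classical HK model, while the memory part contributes $\frac{1}{|I_i(k)|}\sum_{s}|a_{k+1-s}^{(\alpha)}|\,(x_i(s)-x_{i+1}(s))$-type corrections that are controlled because $\sum_{s=0}^{k-1}|a_{k+1-s}^{(\alpha)}| < 1$ (Proposition with \eqref{equ:sumalphaleq1}) and, by Proposition \ref{prop:akalphadecreas}, the weights on any \emph{fixed} early time $s$ vanish as $k \to \infty$ — so the persistent memory mass concentrates on recent times, where the gaps are already small.

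The third step is to handle the subtle tail of the memory sum: even though each $|a_{k+1-s}^{(\alpha)}| \to 0$ for fixed $s$, the sum over all $s$ carries mass $\to 1-\alpha$. I would split $\sum_{s=0}^{k-1}|a_{k+1-s}^{(\alpha)}| x_i(s) = \sum_{s=0}^{T} + \sum_{s=T+1}^{k-1}$ for a large fixed $T$: the first piece is $O(\max_{k-s \ge k-T}|a_{k+1-s}^{(\alpha)}|) \to 0$ as $k \to \infty$ (for fixed $T$), and on the second piece all $x_i(s)$ with $s > T$ lie within $\delta$ of the limit provided $T$ is chosen after the trajectory has $\delta$-stabilized — but that last clause is circular, so instead I would phrase it via a Cauchy/oscillation argument: letting $\omega := \limsup_k x_i(k) - \liminf_k x_i(k)$, the update inequality forces $\omega$ to satisfy $\omega \le (1-\text{something positive})\,\omega$ once block separation has occurred, hence $\omega = 0$. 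Finally, once $\lim_k x_i(k) = x_i^*$ exists for every $i$, passing to the limit in \eqref{model:HKfrac} — using that $\sum_{s=0}^{k-1}|a_{k+1-s}^{(\alpha)}| \to 1-\alpha$, that the neighbor sets stabilize, and that $\sum_{s=0}^{k-1}|a_{k+1-s}^{(\alpha)}| x_i(s) \to (1-\alpha)x_i^*$ by the split above together with $x_i(s) \to x_i^*$ — yields
\begin{equation*}
x_i^* = \frac{\sum_{l \in \bar I_i^*} x_l^*}{|I_i^*|} + \frac{1}{|I_i^*|}\big[(1-\alpha)x_i^* + \alpha x_i^*\big] = \frac{\sum_{l \in I_i^*} x_l^*}{|I_i^*|},
\end{equation*}
where $I_i^*$ is the eventual neighbor set, so $x^*$ is an equilibrium. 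I expect step two — rigorously establishing the finite-time block separation and the within-block contraction in the presence of the self-memory coupling — to be where the real work lies.
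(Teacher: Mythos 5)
There is a genuine gap, and it sits exactly at the point the paper identifies as the central difficulty of this theorem. Your first step claims that $m(k)=\min_i x_i(k)$ is nondecreasing and $M(k)=\max_i x_i(k)$ is nonincreasing, via a joint induction: assuming $m(0)\le\cdots\le m(k)$, you assert that every term entering $x_1(k+1)$ is $\ge m(k)$. But the memory terms are the agent's \emph{own} past values $x_1(s)=m(s)$ for $s<k$, and under your induction hypothesis these satisfy $m(s)\le m(k)$ --- the inequality points the wrong way. The convex combination in (\ref{model:HKfrac}) is therefore only bounded below by $m(0)$, not by $m(k)$: the memory term drags the extreme opinion back toward its older, more extreme values. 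Indeed, monotonicity of the boundary opinions is \emph{false} for this model; Example \ref{exam:nonmono} gives $x_M(1)=0.6834<x_M(2)=0.7079$, and the Remark following Theorem \ref{thm:Convergence} explicitly states that memory destroys the monotonicity on which the classical HK argument relies. So the foundation of your step one collapses, and with it the reduction to monotone bounded sequences.

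What survives of your step one is only $m(k)\ge m(0)$ and $M(k)\le M(0)$ (every update is a convex combination of values in $[m(0),M(0)]$), which is what the paper actually proves first. The paper then replaces monotonicity by a weaker surrogate: using $\lim_k|a_k^{(\alpha)}|=0$ it constructs times $K_1<K_2<\cdots$ such that $x_1(k)\le x_1(K_l)$ for all $k\ge K_l$ (a decreasing sequence of ``tail maxima''), extracts the limit along this subsequence, and then rules out $\varliminf_{k\to\infty} x_1(k)$ being strictly smaller via a $\delta$--$\epsilon$ splitting of the memory sum quite similar to the one you sketch in your third step. Your splitting $\sum_{s\le T}+\sum_{s>T}$ and the observation that the persistent memory mass concentrates on recent times are the right tools, but they must be used to \emph{substitute} for monotonicity of the extremes, not merely to finish after assuming it. Your steps two and three (finite-time block separation and the contraction $\omega\le(1-c)\omega$) are also left as sketches --- the paper sidesteps that clustering machinery by peeling off, one group at a time, the agents converging to the current extreme limit --- but the decisive defect is the false monotonicity claim on which everything downstream rests.
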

\begin{proof}
First we consider the maximum opinion value $x_M(k)=\max_{i\in\mathcal{V}}x_i(k)$, $k\geq 0$. By the order-preserving property revealed in Theorem \ref{thm:Order-Preserving Property}, we can denote $x_1(k)=x_M(k)$ without loss of generality.
We will show that $x_1(k) \leq x_1(0)$ for all $k>0$.

For $k=0$, the definition of $x_1(k)$ implies, for all $ i \in \mathcal{V}$ and $k>0$, $x_i(0)\leq x_1(0)$,
thus,
\begin{equation} \label{equ:xioayu1}
	\frac{\sum_{j \in I_1(0)}x_j(0)}{|I_1(0)|} \leq x_1(0).
\end{equation}
When $k=0$, we have $\displaystyle\sum_{s=0}^{k-1}|a_{k+1-s}^{(\alpha)}|=0$, so the model (\ref{model:HKfrac}) degenerates to the model (\ref{model:HKorig}) (the classical HK model), which implies
\begin{equation} \label{equ:0-1}
	x_1(1)=\frac{\sum_{j \in I_1(0)}x_j(0)}{|I_1(0)|}\leq x_1(0).
\end{equation}
For $k=1$, the definition of $x_1(k)$ and (\ref{equ:0-1}) imply
\begin{equation} \label{equ:xiaoyu2}
	\frac{\sum\limits_{j \in \bar{I}_1(1)}x_j(1)}{|I_1(1)|}\leq \frac{\sum\limits_{j \in \bar{I}_1(1)}x_1(1)}{|I_1(1)|}\leq \frac{\sum\limits_{j \in \bar{I}_1(1)}x_1(0)}{|I_1(1)|}.
\end{equation}
Since $|a_m^{(\alpha)}|<1$ for each $m\geq 0$, by (\ref{equ:0-1}) and (\ref{equ:xiaoyu2}) we have
\begin{equation*}
\begin{split}
	x_1(2)=&\frac{\sum_{j \in \bar{I}_1(1)}x_1(1)}{|I_1(1)|}+\frac{1}{|I_1(1)|}\Big[|a_2^{(\alpha)}|x_1(0)+\Big(1-|a_2^{(\alpha)}|\Big)x_1(1)\Big]\\
	\leq &\frac{\sum_{j \in \bar{I}_1(1)}x_1(0)}{|I_1(1)|}+\frac{1}{|I_1(1)|}\Big[|a_2^{(\alpha)}|x_1(0)+\Big(1-|a_2^{(\alpha)}|\Big)x_1(0)\Big]\\
	=&x_1(0).
\end{split}
\end{equation*}
Applying recursive reasoning we can show that
\begin{equation}
	x_1(k)\leq x_1(0), \quad k> 0.
\end{equation}
Next we will show that there exists a sequence $0<K_{1}<K_{2}<K_{3}< \dots$ such that
\begin{equation}\label{equa:ordermax}
 	x_1(0) \geq x_1(K_{1}) \geq x_1(K_{2}) \geq \ldots.
 \end{equation}
We first demonstrate the existence of $K_1>0$ such that
\begin{equation}\label{equ:existK1}
	x_1(k) \leq x_1(K_1), \quad k>K_1.
\end{equation}
By Proposition \ref{prop:akalphadecreas} ($\lim\limits_{k \rightarrow \infty}|a_k^{(\alpha)}|=0$), we know that given any $T>0$ and $\delta>0$, we can find $k^{(T)}>T$ (one can also observe that $k^{(T)}-T$ only depends on $\delta$) such that
\begin{equation}\label{equ:alphaleqdelta}
	|a_{k^{(T)}+1-s}^{(\alpha)}| < \frac{\delta}{T+1}, \quad \forall s \leq T.
\end{equation}
Since $0\leq x_i(k)\leq1$, we have
\begin{equation} \label{equ:wuxiangxiao}
	\sum_{s=0}^{T}|a_{k^{(T)}+1-s}^{(\alpha)}|x_{i}(s) \leq \delta.
\end{equation}
For $k>k^{(T)}$, the system (\ref{model:HKfrac}) can be rewritten as
\begin{equation}\label{equ:gaixie}
	\begin{split}
		x_{i}(k+1)= & \frac{\sum_{j \in \bar{I}_{i}(k)}x_{j}(k)}{|I_{i}(k)|}\\
&+\frac{1}{|I_{i}(k)|}\Big[\sum_{s=0}^T|a_{k+1-s}^{(\alpha)}|x_{i}(s)+\sum_{s=T+1}^{k-1}|a_{k+1-s}^{(\alpha)}|x_{i}(s)+\Big(1-\sum_{s=0}^{k-1}|a_{k+1-s}^{(\alpha)}|\Big)x_{i}(k)\Big].
\end{split}
\end{equation}
Denote $t_M:=\{T+1\leq k\leq k^{(T)}|x_1(k)\geq x_1(l), T+1\leq l\leq k^{(T)}\}$, $t_m:=\{T+1\leq k\leq k^{(T)}|x_1(k)\leq x_1(l), T+1\leq l\leq k^{(T)}\}$. Given $T>0$, if for any $\delta>0$, the above selected $x_1(t_M)-x_1(t_m)<\frac{1}{|a_{k^{(T)}-T}^{(\alpha)}|} \delta$, we know
\begin{equation}\label{equ:x1MmkT}
  x_1(t_M)=x_1(t_m)=x_1(k),\quad T+1\leq k\leq k^{(T)}.
\end{equation}
If (\ref{equ:x1MmkT}) holds for all $T>0$, (\ref{equ:existK1}) and also (\ref{equa:ordermax}) hold. Otherwise, there exists $T>0$ and  $\delta>0$ such that
\begin{equation}\label{equ:x1Mmgeq}
  x_1(t_M)-x_1(t_m)> \frac{1}{|a_{k^{(T)}-T}^{(\alpha)}|} \delta.
\end{equation}
Since $\{|a_m^{(\alpha)}|, m\geq 0\}$ is decreasing by Proposition \ref{prop:akalphadecreas},
\begin{equation}\label{equ:alphaleq1}
 \frac{|a_{k^{(T)}+s-t_m}^{(\alpha)}|}{|a_{k^{(T)}-T}^{(\alpha)}|}\geq 1, \quad s\geq 1.
\end{equation}
Let $\bar{a}_M=1-\sum_{s=0}^{M-1}|a_{M+1-s}^{(\alpha)}|$, by (\ref{equ:sumalphaleq1}), (\ref{equ:wuxiangxiao}) and (\ref{equ:gaixie}),
\begin{equation} \label{equ:zhuyao0}
	\begin{split}
x_1(k^{(T)}+1)=&\frac{\sum\limits_{j \in \bar{I}_1(k^{(T)})}x_j(k^{(T)})}{|I_1(k^{(T)})|}+\frac{1}{|I_1(k^{(T)})|}\Big[\sum_{s=0}^T|a_{k^{(T)}+1-s}^{(\alpha)}| x_1(s)\\
&+\sum_{s=T+1}^{k^{(T)}-1}|a_{k^{(T)}+1-s}^{(\alpha)}|x_1(s)+\Big(1-\sum_{s=0}^{k^{(T)}-1}|a_{k^{(T)}+1-s}^{(\alpha)}|\Big)x_1(k^{(T)})\Big]\\
\leq &\frac{\sum\limits_{j \in \bar{I}_1(k^{(T)})}x_j(k^{(T)})}{|I_1(k^{(T)})|}+\frac{1}{|I_1(k^{(T)})|}\Big(\delta+\sum_{s=T+1}^{k^{(T)}-1}|a_{k^{(T)}+1-s}^{(\alpha)}|x_1(s)+\bar{a}_{k^{(T)}}x_1(k^{(T)})\Big).
\end{split}
\end{equation}
Further by (\ref{equ:x1Mmgeq}) and (\ref{equ:alphaleq1}), we obtain
\begin{equation} \label{equ:zhuyao}
\begin{split}
x_1(k^{(T)}+1)<&\frac{\sum\limits_{j \in \bar{I}_1(k^{(T)})}x_1(k^{(T)})}{|I_1(k^{(T)})|}+\frac{1}{|I_1(k^{(T)})|}\Big(\delta-\frac{|a_{k^{(T)}+1-t_m}^{(\alpha)}|}{|a_{k^{(T)}-T}^{(\alpha)}|}\delta\\
&+\sum_{s=T+1}^{k^{(T)}-1}|a_{k^{(T)}+1-s}^{(\alpha)}|x_1(t_M)+\bar{a}_{k^{(T)}}x_1(t_M)\Big)\\
< &\frac{\sum\limits_{j \in \bar{I}_1(k^{(T)})}x_1(t_M)}{|I_1(k^{(T)})|}+\frac{1}{|I_1(k^{(T)})|}x_1(t_M)\\
=&x_1(t_M).
\end{split}	
\end{equation}
By (\ref{equ:alphaleq1}) and (\ref{equ:zhuyao}),
\begin{equation}\label{equ:zhuyao21}
	\begin{split}
		x_1(k^{(T)}+2)=&\frac{\sum\limits_{j \in \bar{I}_1(k^{(T)}+1)}x_j(k^{(T)}+1)}{|I_1(k^{(T)}+1)|}+\frac{1}{|I_1(k^{(T)}+1)|}\Big[\sum_{s=0}^T|a_{k^{(T)}+2-s}^{(\alpha)}|x_1(s)+\sum_{s=T+1}^{k^{(T)}}|a_{k^{(T)}+2-s}^{(\alpha)}|x_1(s)\\
&+\Big(1-\sum_{s=0}^{k^{(T)}}|a_{k^{(T)}+2-s}^{(\alpha)}|\Big)x_1(k^{(T)}+1)\Big]\\
< &\frac{\sum\limits_{j \in \bar{I}_1(k^{(T)}+1)}x_1(t_M)}{|I_1(k^{(T)}+1)|}+\frac{1}{|I_1(k^{(T)}+1)|}\Big(\delta+\sum_{s=T+1}^{k^{(T)}}|a_{k^{(T)}+2-s}^{(\alpha)}|x_1(s)+\bar{a}_{k^{(T)}+1}x_1(k^{(T)}+1)\Big).
\end{split}
\end{equation}
By (\ref{equ:x1Mmgeq}), (\ref{equ:alphaleq1}) and (\ref{equ:zhuyao21}),
\begin{equation}\label{equ:zhuyao2}
	\begin{split}
		x_1(k^{(T)}+2)<&\frac{\sum\limits_{j \in \bar{I}_1(k^{(T)}+1)}x_1(t_M)}{|I_1(k^{(T)}+1)|}+\frac{1}{|I_1(k^{(T)}+1)|}\Big(\delta-\frac{|a_{k^{(T)}+2-t_m}^{(\alpha)}|}{|a_{k^{(T)}-T}^{(\alpha)}|}\delta\\
&+\sum_{s=T+1}^{k^{(T)}}|a_{k^{(T)}+2-s}^{(\alpha)}|x_1(t_M)+\bar{a}_{k^{(T)}+1}x_1(t_M)\Big)\\
< &\frac{\sum\limits_{j \in \bar{I}_1(k^{(T)}+1)}x_1(t_M)}{|I_1(k^{(T)}+1)|}+\frac{1}{|I_1(k^{(T)}+1)|}x_1(t_M)\\
=&x_1(t_M).
\end{split}	
\end{equation}
Applying recursive method we obtain
\begin{equation*}
  x_1(t_M)\geq x_1(k),\quad k\geq t_M.
\end{equation*}
Take $K_1=t_M$, and we know that (\ref{equ:existK1}) holds. Repeating the above procedure by considering the case $T>K_1$, we can prove that there exists $K_2>K_1$ that
\begin{equation*}
  x_1(k)\leq x_1(K_2)\leq x_1(K_1),\quad k\geq K_2.
\end{equation*}
Iterating this argument, we show that there exists a sequence $K_{1}<K_{2}<K_{3}< \dots$ such that
\begin{equation}\label{equ:Ksleq}
  x_1(k)\leq x_1(K_l), \quad k\geq K_l,\quad l=1,2,\dots
\end{equation}
and
\begin{equation*}
 	x_1(0) \geq x_1(K_{1}) \geq x_1(K_{2}) \geq \dots.
 \end{equation*}
The sequence $\{x_1(K_l), l=1,2,\dots\}$ is monotonically decreasing and bounded below, so there exists $x_1^*\in [0,1]$ such that
 \begin{equation}\label{equ:limitx1}
   \lim_{l\to \infty}x_1(K_l)=x_1^*.
 \end{equation}
This also implies
\begin{equation}\label{equ:limitx1k}
  \lim\limits_{k \rightarrow \infty}x_1(k)=\lim\limits_{l \rightarrow \infty}x_1(K_l)=x_1^*.
\end{equation}
If not, we have $\varliminf \limits_{k \rightarrow \infty}x_1(k)< \lim \limits_{l \rightarrow \infty}x_{M}(K_l)=x_1^*$.
% \begin{equation}\label{equ:inflimitleq}
% \end{equation}
Hence there exists $\rho>0$ such that for any $t>0$ there exists $k_t^\rho>t$ such that
\begin{equation}\label{equ:x1ktrho}
  x_1(k_t^\rho)<x_1^*-\rho.
\end{equation}
Meanwhile, by (\ref{equ:Ksleq}) and (\ref{equ:limitx1}), we know for any $\epsilon>0$, there exists $T_\epsilon>0$ such that
\begin{equation}\label{equ:Kepsi}
  x_1(k)<x_1^*+\epsilon,\quad k>T_\epsilon.
\end{equation}
By Proposition \ref{prop:akalphadecreas} and (\ref{equ:alphaleqdelta}), given any $\delta>0$, $T>0$, there exists $k^{(T)}>0$ such that for $k>k^{(T)}$, (\ref{equ:alphaleqdelta}) holds and moreover, $k^{(T)}-T$ only depends on $\delta$. So we can denote $L(\delta)=k^{(T)}-T$.
Take $\delta=\frac{\alpha\rho}{2}$ in (\ref{equ:alphaleqdelta}), $\epsilon=\frac{\alpha\rho}{2L(\delta)}$ in (\ref{equ:Kepsi}), $T$ in (\ref{equ:alphaleqdelta}) is then taken to be $T_\epsilon$ in (\ref{equ:Kepsi}). Also by (\ref{equ:x1ktrho}), there exists $k_{T_\epsilon}^\rho\geq k^{(T_\epsilon)}$ such that
\begin{equation}\label{equ:x1ktepsrho}
  x_1(k_{T_\epsilon}^\rho)<x_1^*-\rho.
\end{equation}
By (\ref{equ:alphaleqdelta}), (\ref{equ:Kepsi}) and (\ref{equ:x1ktepsrho}),
\begin{equation}\label{equ:x1ktepsrhoplus1}
	\begin{split}
		x_1(k_{T_\epsilon}^\rho+1)=&\frac{\sum\limits_{j \in \bar{I}_1(k_{T_\epsilon}^\rho)}x_j(k_{T_\epsilon}^\rho)}{|I_1(k_{T_\epsilon}^\rho)|}+\frac{1}{|I_1(k_{T_\epsilon}^\rho)|}\Big(\sum_{s=0}^{T_\epsilon}|a_{k_{T_\epsilon}^\rho+1-s}^{(\alpha)}|x_1(s)\\ &+\sum_{s=T_\epsilon+1}^{k_{T_\epsilon}^\rho-1}|a_{k_{T_\epsilon}^\rho+1-s}^{(\alpha)}|x_1(s)+\bar{a}_{k_{T_\epsilon}^\rho}x_1(k_{T_\epsilon}^\rho)\Big)\\
< &\frac{\sum\limits_{j \in \bar{I}_1(k_{T_\epsilon}^\rho)}x_1(k_{T_\epsilon}^\rho)}{|I_1(k_{T_\epsilon}^\rho)|}+\frac{1}{|I_1(k_{T_\epsilon}^\rho)|}\Big(\delta+\sum_{s=T_\epsilon+1}^{k_{T_\epsilon}^\rho-1}|a_{k_{T_\epsilon}^\rho+1-s}^{(\alpha)}|(x_1^*+\epsilon)+\bar{a}_{k_{T_\epsilon}^\rho}(x_1^*-\rho)\Big)\\
\leq&(x_1^*-\rho)+\frac{1}{|I_1(k_{T_\epsilon}^\rho)|}\Big(\delta+\sum_{s=T_\epsilon+1}^{k_{T_\epsilon}^\rho-1}|a_{k_{T_\epsilon}^\rho+1-s}^{(\alpha)}|(\epsilon+\rho)\Big)\\
\leq &x_1^*-\Big(1-\sum_{s=T_\epsilon+1}^{k_{T_\epsilon}^\rho-1}|a_{k_{T_\epsilon}^\rho+1-s}^{(\alpha)}|\Big)\rho+\delta+L(\delta)\epsilon.
\end{split}		
\end{equation}
Since $1-\sum_{s=T_\epsilon+1}^{k_{T_\epsilon}^\rho-1}|a_{k_{T_\epsilon}^\rho+1-s}^{(\alpha)}|\geq \alpha$ by the comment (\ref{comment2}) behind the  system (\ref{model:HKfrac}), (\ref{equ:x1ktepsrhoplus1}) yields
\begin{equation}\label{equ:x1ktepsrhoplus12}
	\begin{split}
		x_1(k_{T_\epsilon}^\rho+1)< &x_1^*-\alpha\rho+\delta+L(\delta)\epsilon<x_1^*.
\end{split}		
\end{equation}
Following a similar argument of (\ref{equ:zhuyao2}), we can show that
\begin{equation}\label{equ:x1ktepsrhoall}
  x_1(k)<x_1^*,\quad k>k_{T_\epsilon}^\rho,
\end{equation}
which contradicts (\ref{equ:limitx1}), and hence (\ref{equ:limitx1k}) holds.

Let $\mathcal{V}_1:=\{i \in \mathcal{V}|\lim\limits_{k \rightarrow \infty}x_i(k)=x_1^*\}$. If $j \notin \mathcal{V}_1$, it is easy to show that $j$ cannot be the neighbor of agents in $\mathcal{V}_1$ (or their opinion values have to be less than $x_1^*$). Then for $\mathcal{V}/\mathcal{V}_1$, we can similarly define $x_1(k)$ and a same method leads to
\begin{equation}
	\lim\limits_{k \rightarrow \infty}x_1(k)=x^*_2.
\end{equation}
where $x^*_2\in[0,1]$ and $|x_1^*-x^*_2|>\epsilon$.

By repeating the above procedure, we can conclude that the opinion values of all individuals asymptotically converge. Clearly, $x^* = (x_1^*,...,x_n^*)$ constitutes an equilibrium state of the system.
\end{proof}

\begin{rem}
The convergence analysis becomes significantly more challenging in the fractional-order case compared to its integer-order counterpart. Unlike classical HK dynamics where boundary opinions evolve monotonically, the memory effects in fractional-order systems destroy this crucial monotonicity property. This technical obstacle is further compounded by the inherent complexity of bounded-confidence models, where extreme nonlinear interactions already make convergence proofs notoriously difficult - even the convergence problem for integer-order heterogeneous models remains open to date. To overcome these challenges, our proof strategy involves constructing carefully designed asymptotic subsequences and developing inductive techniques to properly account for historical state dependencies while preventing potential oscillatory behaviors.
\end{rem}
The following simple example demonstrates the loss of monotonicity in boundary opinion evolution:

%\noindent\textbf{Example}
\begin{example}\label{exam:nonmono}
For the system (\ref{model:HKfrac}), given the initial opinion values $x(0)=(1.0944, 0.2772)$ and the neighbor radius $\epsilon=1$, we obtain the following:
\begin{gather*}
	x(1)=(0.6823, 0.6834)\\
	x(2)=(0.7079, 0.6591)\\
	x(3)=(0.6941, 0.6735)\\
	x(4)=(0.6938, 0.6766)\\
	x(5)=(0.6858, 0.6788)\\
	x(6)=(0.6869, 0.6791).
\end{gather*}
\end{example}
Compared to the classical HK model where maximum opinion values evolve in a monotonically non-increasing manner, we can see the model (\ref{model:HKfrac}) in Example \ref{exam:nonmono} exhibits non-monotonic fluctuations with occasional upward perturbations. This phenomena can also be observed in Figures \ref{fig:basicevolu}-\ref{fig:consensus}. This more realistically captures phenomena such as opinion rebounds due to memory effect, local reinforcement effects, and cognitive updates--reflecting the inherent nonlinearity of real-world opinion dynamics.

\subsection{Consensus Criterion}
In opinion dynamics research, the mathematical characterization of consensus conditions remains a fundamental theoretical problem. Based on Theorem \ref{thm:Convergence}, we establish the following consensus theorem:
%\begin{defn}
%The system (\ref{model:HKfrac}) is said to achieve consensus if
%$$\lim_{k \rightarrow \infty}d(k)=0,$$ for given initial condition $x(0)=(x_{1}(0), x_{2}(0), \ldots, x_{n}(0))$.
%\end{defn}
\begin{thm} \label{thm:consensus}
Let $x_M(k)=\max_{i\in\mathcal{V}}x_i(k)$, $x_m(k)=\min_{i\in\mathcal{V}}x_i(k)$ and $d(k)=x_M(k)-x_m(k)$ for $k\geq 0$. Given $\epsilon\in(0,1]$, suppose the initial opinion value satisfies $d(0) \leq \epsilon$. Then the system (\ref{model:HKfrac}) asymptotically achieves consensus, i.e.,
	\begin{equation*}
		\lim_{k \rightarrow \infty}d(k)=0.
	\end{equation*}
\end{thm}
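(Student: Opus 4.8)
The plan is to use the hypothesis $d(0)\le\epsilon$ to show that the interaction graph stays complete forever, which turns (\ref{model:HKfrac}) into an explicit mean-plus-memory recursion, and then to combine the convergence from Theorem \ref{thm:Convergence} with a summability (Toeplitz-type) argument to identify the common limit.

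First I would prove $d(k)\le d(0)\le\epsilon$ for all $k\ge 0$. By the order-preserving property (Theorem \ref{thm:Order-Preserving Property}) the agents realizing the maximum and the minimum are fixed along the trajectory; write $x_1(k)=x_M(k)$ and $x_n(k)=x_m(k)$. The induction carried out at the start of the proof of Theorem \ref{thm:Convergence} --- $x_1(k+1)$ is a convex combination of $\{x_j(k):j\in\bar I_1(k)\}$ and $\{x_1(s):0\le s\le k\}$, the self-weight $1-\sum_{s=0}^{k-1}|a_{k+1-s}^{(\alpha)}|$ being positive by (\ref{equ:sumalphaleq1}) --- yields $x_M(k)\le x_M(0)$, and the mirror-image induction yields $x_m(k)\ge x_m(0)$; hence $d(k)\le d(0)\le\epsilon$. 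Consequently $|x_i(k)-x_j(k)|\le d(k)\le\epsilon$ for all $i,j$ and all $k$, so $I_i(k)=\mathcal V$ and $|I_i(k)|=n$ for every $i$ and $k$.

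With $|I_i(k)|\equiv n$, (\ref{model:HKfrac}) reduces to
\begin{equation*}
x_i(k+1)=\frac1n\sum_{j\ne i}x_j(k)+\frac1n\Big[\sum_{s=0}^{k-1}|a_{k+1-s}^{(\alpha)}|x_i(s)+\Big(1-\sum_{s=0}^{k-1}|a_{k+1-s}^{(\alpha)}|\Big)x_i(k)\Big].
\end{equation*}
By Theorem \ref{thm:Convergence}, $x_i(k)\to x_i^*$ for each $i$, so I would let $k\to\infty$ in this identity. Since $\sum_{s=0}^{k-1}|a_{k+1-s}^{(\alpha)}|=\sum_{m=2}^{k+1}|a_m^{(\alpha)}|\to 1-\alpha$ by (\ref{equ:sumalphaleq1}), the self-weight tends to $\alpha$ and the last term to $\tfrac{\alpha}{n}x_i^*$. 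For the memory term I would write $\sum_{s=0}^{k-1}|a_{k+1-s}^{(\alpha)}|x_i(s)=\sum_{s=0}^{k-1}|a_{k+1-s}^{(\alpha)}|\big(x_i(s)-x_i^*\big)+\big(\sum_{s=0}^{k-1}|a_{k+1-s}^{(\alpha)}|\big)x_i^*$; the second part tends to $(1-\alpha)x_i^*$, while the first part tends to $0$ by a regular-summability argument --- for each fixed $s$ the coefficient $|a_{k+1-s}^{(\alpha)}|\to0$ (Proposition \ref{prop:akalphadecreas}), the coefficients are nonnegative with total mass below $1$, and $x_i(s)-x_i^*\to0$. Hence the right-hand side converges to $\tfrac1n\sum_{j\ne i}x_j^*+\tfrac1n x_i^*=\tfrac1n\sum_{j}x_j^*$, which does not depend on $i$; therefore $x_1^*=\dots=x_n^*$ and $\lim_{k\to\infty}d(k)=x_M^*-x_m^*=0$.

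I expect the limit of the memory term to be the main obstacle: the whole history contributes and must be handled at once, so one has to verify carefully the hypotheses of the Toeplitz/regular-summability lemma (a uniform $\ell^1$ bound on each weight row, column-wise convergence of the weights to $0$, and row sums converging to $1-\alpha$ rather than $1$). Everything else is routine given (\ref{equ:sumalphaleq1}), Proposition \ref{prop:akalphadecreas}, and Theorems \ref{thm:Order-Preserving Property}--\ref{thm:Convergence}. As an alternative to passing to the limit in the coordinate equations, one may subtract the equations for agents $1$ and $n$ to obtain the scalar identity $d(k+1)=\tfrac1n\sum_{s=0}^{k-1}|a_{k+1-s}^{(\alpha)}|\big(d(s)-d(k)\big)$ and let $k\to\infty$, using $d(k)\to d^*$ and the same summability facts, to conclude $d^*=0$.
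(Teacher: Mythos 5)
Your proposal is correct, but it identifies the common limit by a genuinely different route than the paper. The first half coincides: showing $x_M(k)\le x_M(0)$ and $x_m(k)\ge x_m(0)$ via order preservation and the induction from the start of the proof of Theorem \ref{thm:Convergence}, hence $d(k)\le d(0)\le\epsilon$ and $I_i(k)=\mathcal V$ for all $k$ --- this is exactly Lemma \ref{lem:neigh}. From there the paper simply declares the complete-graph case ``a special instance of Theorem \ref{thm:Convergence}''; what it is implicitly invoking is the cluster argument inside that proof (the set $\mathcal V_1$ of agents converging to $x_1^*$, and the fact that any agent outside $\mathcal V_1$ must eventually cease to be a neighbor of agents in $\mathcal V_1$, which permanent full connectivity forbids), since the statement of Theorem \ref{thm:Convergence} alone only gives individual limits, not a shared one. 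You instead use Theorem \ref{thm:Convergence} purely as a black box for the existence of $x_i^*$ and then pass to the limit in the explicit recursion $x_i(k+1)=\frac1n\sum_{j\ne i}x_j(k)+\frac1n\bigl[\sum_{s=0}^{k-1}|a_{k+1-s}^{(\alpha)}|x_i(s)+(1-\sum_{s=0}^{k-1}|a_{k+1-s}^{(\alpha)}|)x_i(k)\bigr]$; your Toeplitz-type handling of the memory term is sound (column-wise decay of $|a_{k+1-s}^{(\alpha)}|$ by Proposition \ref{prop:akalphadecreas}, row sums tending to $1-\alpha$ and bounded by $1$ via (\ref{equ:sumalphaleq1})), and it yields $x_i^*=\frac1n\sum_j x_j^*$ for every $i$, forcing equality; the alternative scalar identity $d(k+1)=\frac1n\sum_{s=0}^{k-1}|a_{k+1-s}^{(\alpha)}|\bigl(d(s)-d(k)\bigr)$ also checks out and gives $d^*=0$ directly. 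What each approach buys: the paper's proof is one line but opaque, resting on unextracted internals of the convergence proof; yours is longer but self-contained modulo the \emph{statement} of Theorem \ref{thm:Convergence}, and it makes the identification of the common limit explicit and verifiable.
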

\begin{rem}
This theorem demonstrates that the system (\ref{model:HKfrac}) asymptotically reaches consensus under initially fully connected conditions. This conclusion verifies that the fractional-order dynamical model proposed in this work preserves the fundamental properties of classical consensus theory, while providing a theoretical foundation for subsequent studies on consensus conditions under more situations.
\end{rem}
To establish Theorem \ref{thm:consensus}, we first prove that if all individuals are initially mutual neighbors, this connectivity persists indefinitely.
\begin{lem}\label{lem:neigh}
Given $\epsilon\in(0,1]$, suppose the initial opinion value satisfies $|x_M(0)-x_m(0)| \leq \epsilon$, then $|x_M(k)-x_m(k)|\leq \epsilon$ for all $k\geq 0$.
\end{lem}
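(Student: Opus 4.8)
The plan is to prove the statement by strong induction on $k$, using the order-preserving property to pin down the identities of the extremal agents throughout the evolution. First I would relabel the agents so that $x_1(0)\le x_2(0)\le\cdots\le x_n(0)$; by Theorem \ref{thm:Order-Preserving Property} this ordering persists, i.e.\ $x_1(k)\le\cdots\le x_n(k)$ for every $k\ge 0$, so that $x_M(k)=x_n(k)$, $x_m(k)=x_1(k)$ and $d(k)=x_n(k)-x_1(k)\ge 0$. The case $n=1$ is trivial ($d\equiv 0$), so assume $n\ge 2$; the base case $k=0$ is exactly the hypothesis $d(0)\le\epsilon$.

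For the inductive step, assume $d(l)\le\epsilon$ for all $l\le k$. Then at each such time $l$ we have $|x_i(l)-x_j(l)|\le d(l)\le\epsilon$ for all $i,j\in\mathcal{V}$, hence $I_i(l)=\mathcal{V}$ and $|I_i(l)|=n$; in particular, at time $k$ the update (\ref{model:HKfrac}) reduces to
\begin{equation*}
x_i(k+1)=\frac{1}{n}\sum_{j\ne i}x_j(k)+\frac{1}{n}\Big[\sum_{s=0}^{k-1}|a_{k+1-s}^{(\alpha)}|x_i(s)+\Big(1-\sum_{s=0}^{k-1}|a_{k+1-s}^{(\alpha)}|\Big)x_i(k)\Big].
\end{equation*}
Subtracting this identity for $i=1$ from that for $i=n$, the current-time averages cancel down to $-d(k)/n$ because $\sum_{j\ne n}x_j(k)-\sum_{j\ne 1}x_j(k)=x_1(k)-x_n(k)=-d(k)$, while the memory and self-influence terms combine into $\sum_{s=0}^{k-1}|a_{k+1-s}^{(\alpha)}|d(s)+(1-\sum_{s=0}^{k-1}|a_{k+1-s}^{(\alpha)}|)d(k)$. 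Collecting terms, I expect the clean recursion
\begin{equation*}
d(k+1)=\frac{1}{n}\sum_{s=0}^{k-1}|a_{k+1-s}^{(\alpha)}|\,\big(d(s)-d(k)\big),
\end{equation*}
with the empty-sum convention giving $d(1)=0$ directly.

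Closing the estimate is then routine: since $0\le d(s)\le\epsilon$ for all $s\le k$ by the inductive hypothesis and $d(k)\ge 0$, the recursion gives $0\le d(k+1)\le\frac{\epsilon}{n}\sum_{s=0}^{k-1}|a_{k+1-s}^{(\alpha)}|$; and $\sum_{s=0}^{k-1}|a_{k+1-s}^{(\alpha)}|=\sum_{m=2}^{k+1}|a_m^{(\alpha)}|<1$ by (\ref{equ:sumalphaleq1}), so $d(k+1)<\epsilon/n\le\epsilon$, which completes the induction (and in fact shows connectivity is maintained with room to spare). I do not foresee a serious obstacle; the only load-bearing point is that the induction must be taken in its strong form, because $x_i(k+1)$ depends on the whole own-trajectory $x_i(0),\dots,x_i(k)$, so one needs every earlier spread $d(s)$, $s\le k$, to be at most $\epsilon$—both to justify that the neighbor sets have been all of $\mathcal{V}$ at all prior times and to bound the accumulated memory contribution. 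No contraction constant is needed: the memory weights summing to strictly less than one, together with the negative $-d(k)/n$ term from the neighbor average, already forces $d(k+1)$ comfortably below $\epsilon$.
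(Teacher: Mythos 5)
Your proof is correct, but it takes a genuinely different route from the paper's. The paper disposes of the lemma in one line by invoking the envelope bounds established in the proof of Theorem \ref{thm:Convergence} (this is what the citation of (\ref{equa:ordermax}) is standing in for): there it is shown that the maximal opinion satisfies $x_M(k)\le x_M(0)$ for all $k>0$, and the symmetric argument gives $x_m(k)\ge x_m(0)$, whence $x_M(k)-x_m(k)\le x_M(0)-x_m(0)\le\epsilon$ --- an argument that needs no connectivity assumption and shows the diameter never exceeds its initial value for arbitrary initial data. You instead exploit the hypothesis to keep every neighbor set equal to $\mathcal{V}$, derive the exact recursion $d(k+1)=\frac{1}{n}\sum_{s=0}^{k-1}|a_{k+1-s}^{(\alpha)}|\big(d(s)-d(k)\big)$, and close a strong induction using $\sum_{m=2}^{k+1}|a_m^{(\alpha)}|<1$ from (\ref{equ:sumalphaleq1}). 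The algebra checks out (the $-d(k)/n$ cancellation is right, and the strong form of the induction is indeed needed both to justify full connectivity at time $k$ and to bound the accumulated memory terms), your argument is self-contained in that it does not lean on the subsequence machinery of the convergence proof, and it even yields the sharper quantitative conclusion $d(1)=0$ and $d(k)<\epsilon/n$ for $k\ge 1$. One cosmetic point: the lower bound $d(k+1)\ge 0$ does not follow from the recursion alone (the summands $d(s)-d(k)$ can be negative); it comes from Theorem \ref{thm:Order-Preserving Property}, which you invoked at the outset, so nothing is broken. The trade-off is that your derivation is tied to the fully connected case, whereas the paper's monotone-envelope observation is the more general fact.
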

\begin{proof}
By (\ref{equa:ordermax}), we establish that for all discrete time instances $k \geq 0$, the opinion spectrum of the dynamical system (\ref{model:HKfrac}) obeys the inequality:
	\begin{equation*}
		|x_M(k)-x_m(k)| \leq |x_M(0)-x_m(0)| \leq \epsilon.
	\end{equation*}
\end{proof}

\noindent\textbf{Proof of Theorem \ref{thm:consensus}:}
The complete connectivity case represents a special instance of Theorem \ref{thm:Convergence}. When all individuals mutually interact, Lemma \ref{lem:neigh} guarantees maintained connectivity, and Theorem \ref{thm:Convergence} ensures convergence to a shared limit value, resulting in consensus.\hfill $\Box$
%\end{proof}

\section{Simulation}

In this section, we present simulation results that validate the theoretical analysis of the fractional-order opinion dynamics model.

The previous Figure \ref{fig:basicevolu} presents the opinion dynamics of a $5$-agent system with interaction radius $\epsilon=0.2$ and fractional order $\alpha=0.5$.
The simulation results confirm the order-preserving property of the system: for all time steps $k \geq 0$, the inequality $x_i(k) \geq x_j(k)$ holds whenever $x_i(0) \geq x_j(0) $.

Figure \ref{fig:convergence} displays the fragmentation pattern of a similar $5$-agent system ($\epsilon=0.2, \alpha=0.5$). The simulation shows that the system exhibits a fragmented convergence pattern, where distinct clusters emerge with intra-cluster consensus. Notably, each individual's opinion value asymptotically converges to a stable equilibrium point, demonstrating that while global consensus is not achieved, local convergence behavior is clearly observable within subsystem clusters.

\begin{figure}[ht]
  \centering
  % Requires \usepackage{graphicx}
  \includegraphics[width=3in]{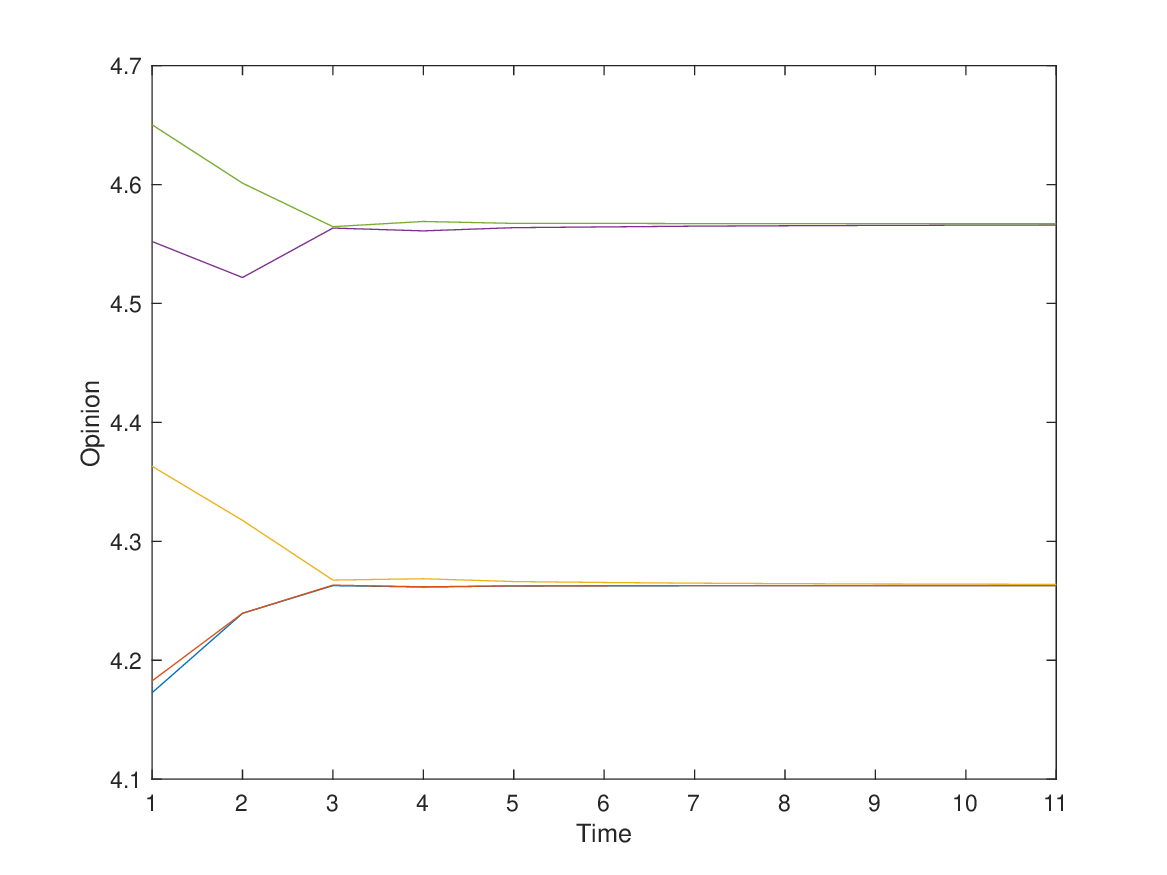}\\
  \caption{Five individuals with neighbor radius of $\epsilon=0.2$ and the  order is $\alpha=0.5$.}
  \label{fig:convergence}
\end{figure}

Figure \ref{fig:consensus} shows consensus achievement in a $4$-agent system with expanded interaction range ($\epsilon=1, \alpha=0.5$). The simulation results verify that when all initial opinions fall within the neighborhood threshold, the fractional-order HK model (\ref{model:HKfrac}) guarantees asymptotic global consensus.

\begin{figure}[ht]
  \centering
  % Requires \usepackage{graphicx}
  \includegraphics[width=3in]{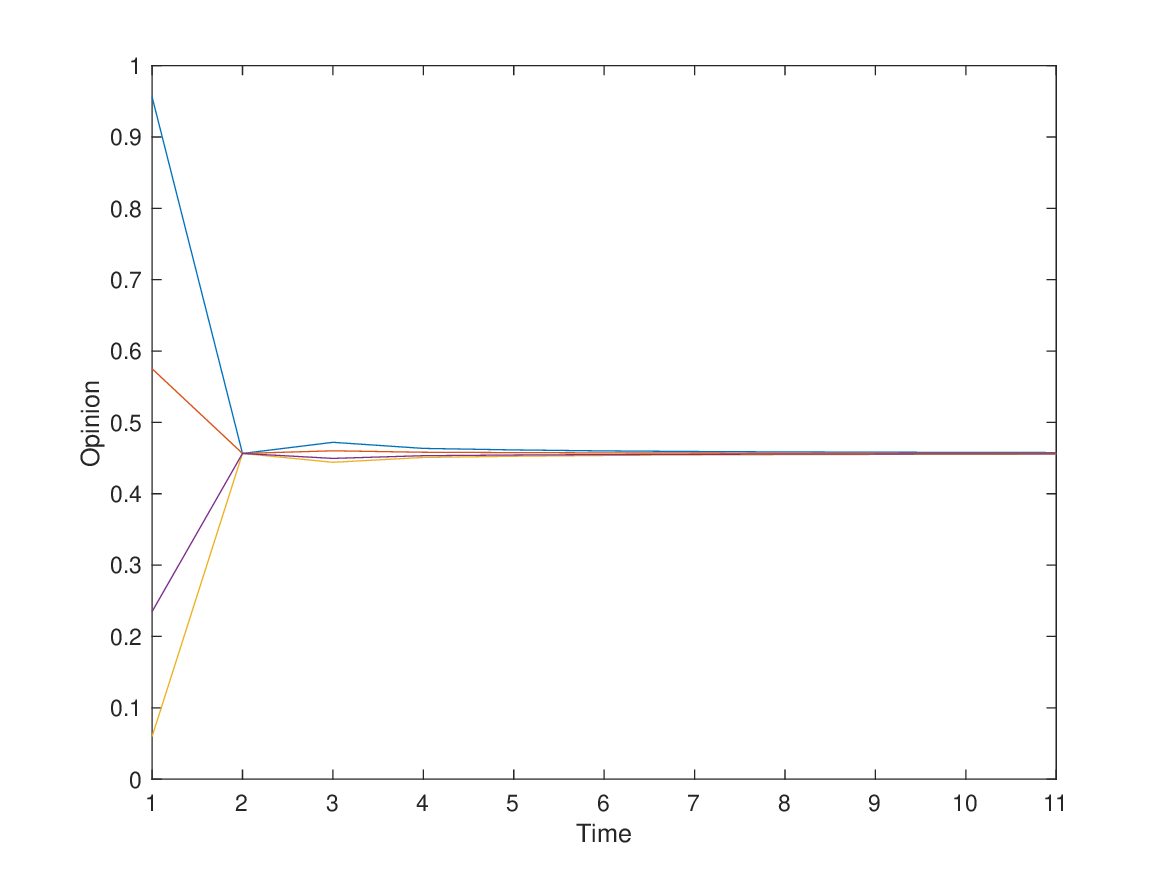}\\
  \caption{Four individuals with neighbor radius of $\epsilon=1$ and the  order is $\alpha=0.5$.}\label{fig:consensus}
\end{figure}
\section{Conclusion}

This paper explores the ``memory'' effect in opinion dynamics and proposes a novel fractional-order model based on the classical Hegselmann-Krause framework. By incorporating fractional-order difference, the model more accurately reflects the influence of historical experiences on individual opinion adjustments, aligning closely with real-world social interactions. Theoretical analysis demonstrates that the model ensures opinion convergence without external interference while preserving the order of individual opinions. This provides new insights into group consensus mechanisms, particularly when initial opinions are similar. The proposed framework naturally incorporates subtle opinion variations in early stages of evolution, faithfully representing the nonlinear complexities of actual opinion formation processes, unlike conventional integer-order approaches.

This research provides a solid foundation for advancing opinion dynamics studies. Future work could explore the interaction between memory effects and external factors like media or social interventions to improve predictive accuracy. The model's ability to capture nonlinear and lagged effects in collective behavior holds potential for applications in public opinion forecasting and group decision-making. Extending this framework to multi-dimensional opinion spaces or heterogeneous networks also offers promising avenues for further exploration.

%----------------------------------------------------------------------------------------
%	 REFERENCES
%----------------------------------------------------------------------------------------
%\printbibliography % Output the bibliography

\end{document}